\documentclass[letterpaper, 10 pt, conference]{ieeeconf}  

\IEEEoverridecommandlockouts                              

\usepackage{graphics} 
\usepackage{epsfig} 
\usepackage{amssymb}  
\usepackage{xcolor}

\usepackage{caption}    

\usepackage{float}

\usepackage{amsmath}   
\usepackage{array}     
\usepackage{booktabs}  
\usepackage{multirow}  
\usepackage{makecell}  
\usepackage{graphicx}  
\usepackage{adjustbox} 

\newcolumntype{P}[1]{>{\centering\arraybackslash}p{#1}}
\newcolumntype{M}[1]{>{\centering\arraybackslash}m{#1}} 

\usepackage[ruled]{algorithm2e}
\RestyleAlgo{ruled}
\SetKwComment{Comment}{/* }{ */}

\newtheorem{proposition}{Proposition}

\title{\LARGE \bf
Learning Higher Order DC-DC Converter Control from Inversion

}

\author{
\authorblockN{Kamakshi Tatkare\authorrefmark{1}, Ufuk Topcu\authorrefmark{2}, and Brian Johnson\authorrefmark{1}}
\authorblockA{\authorrefmark{1}Chandra Family Department of Electrical and Computer Engineering,\\
}
\authorblockA{\authorrefmark{2}Department of Aerospace Engineering and Engineering Mechanics, \\}
\authorblockA{The University of Texas at Austin. \\
Email: kamakshi@utexas.edu, utopcu@utexas.edu, b.johnson@utexas.edu}
}

\begin{document}

\maketitle
\thispagestyle{empty}
\pagestyle{empty}

\begin{abstract}
Control design for high-order converters with nonlinear dynamics is generally difficult. Nonlinear controls introduce formidable model complexity whereas linear controls suffer from poor disturbance rejection. 
Our objective in this paper is to develop a simple neural-network-based controller that learns the control law from supervised trajectories generated by a model-based bounded inversion framework. First, we establish strong invertibility for the converter models. Next, we prescribe the desired closed-loop behavior through a trajectory model that defines the target dynamics. We then invert the converter dynamics numerically to obtain the corresponding duty-cycle and state trajectories. We use these model-based trajectories as supervision and train a neural network to approximate the resulting state-to-duty control law. The results show that the learned controllers accurately recover the inversion-based control action and achieve effective regulation under load and input-voltage disturbances. We validate the method on fourth order \'Cuk, SEPIC, and zeta converters.
\end{abstract}


\section{INTRODUCTION}

Regulating the output voltage of a dc-dc converter in the presence of load and input-voltage variations remains a central control problem in power electronics. Since most converters are inherently nonlinear and often operate over wide ranges of conditions, large-signal control methods attempt to obtain regulation and disturbance rejection beyond what is typically available from local small-signal designs \cite{banerjee1999nonlinear}. For basic converter topologies, several nonlinear control frameworks provide systematic ways to analyze the plant dynamics and construct feedback laws with improved large-signal behavior.

That analytical convenience weakens, however, as converter order increase. Higher order topologies such as the \'Cuk, SEPIC, and zeta converters are fourth order and more difficult to handle within a simple framework. The broader literature on these converters reflects this difficulty. For the \'Cuk converter, prior work includes cascade linearization \cite{Cuk_cascade_partialstates}, nonlinear PI regulation \cite{non_linear_PI_cuk}, sliding-mode variants  \cite{cuk_1995_smc,Cuk_IndirectSMC, Investigate_SMC}, passivity-based with online estimation \cite{PBC_cuk}, hybrid model predictive \cite{mpc_cuk}, hysteresis switching \cite{hysteresis_cuk}, extended-state-observer-based \cite{ESO_cuk}, linear quadratic regulation \cite{LQR_cuk}, and nonlinear \(H_\infty\) \cite{Hinf_cuk} control. The following have been applied to the zeta converter: PI \cite{PI_zeta2,PI_zeta}, sliding-mode and passive-output-feedback designs \cite{sira2006control,Investigate_SMC}, and model-reference adaptive control schemes \cite{adaptive1_zeta} have been reported. Finally, literature on SEPIC converter includes small-signal peak-current-controlled \cite{smallsignal_sepic_cuk,stability_sepic}, PI  \cite{PI_sepic}, sliding-mode-based approaches \cite{smc_indirect_sepic, highf_smc_sepic}, passivity- and backstepping-based \cite{passivity_backstep_sepic}, fuzzy logic control \cite{Fuzzy_sepic}, and model predictive control \cite{mpc_sepic}.

\begin{figure}[t]
  \centering
    \includegraphics{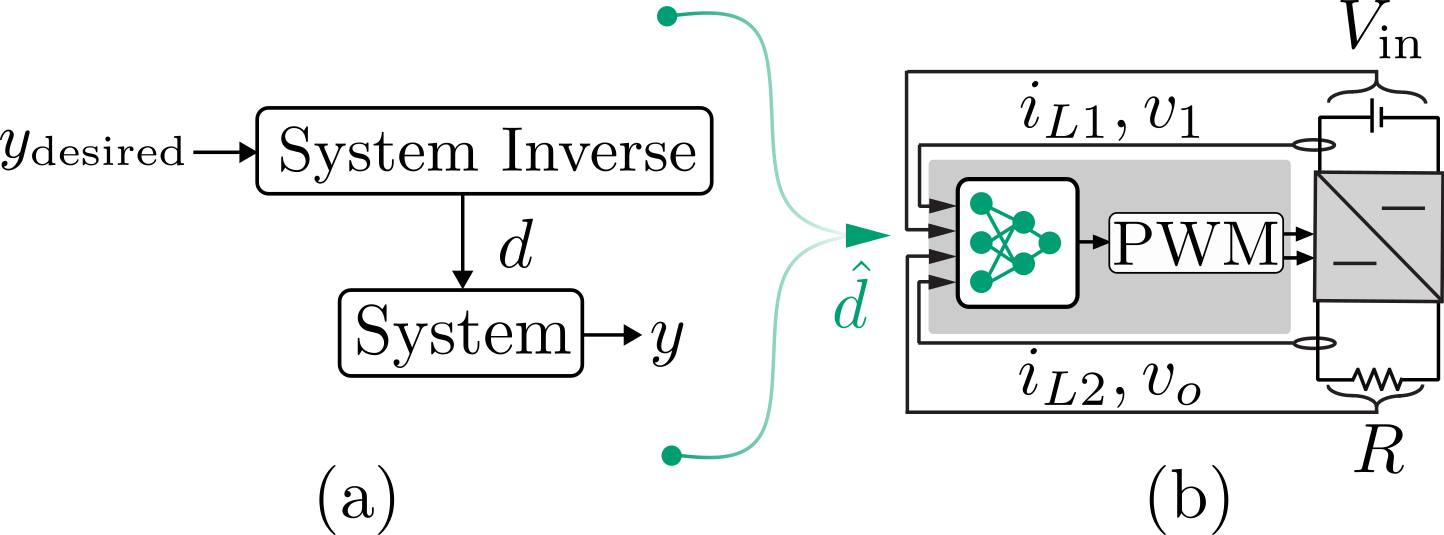}
   \caption{Inversion-based neural-network control: (a) given $y_{\mathrm{desired}}$, the system inverse computes an input $d$ that, when applied through the actuation stage, drives the converter output $y$ to follow $y_{\mathrm{desired}}$ for an operating point, and (b) a neural network maps converter states to the duty, $\hat{d}$, for regulation under input-voltage and load variations.}
  \label{fig:system_inverse}
\end{figure}

Taken together, these results show that voltage regulation for fourth-order converters is achievable, but often with important tradeoffs. Certain controllers are formulated for load disturbances only or around restricted operating conditions \cite{Cuk_cascade_partialstates,LQR_cuk}. Others are designed for third order converters \cite{non_linear_PI_cuk}. Some achieve robustness to line and load variations but rely on auxiliary loops, observers, indirect constructions, or sliding-surface regulation \cite{Cuk_IndirectSMC,ESO_cuk,Investigate_SMC,smc_indirect_sepic}. Other methods do not explicitly shape the output transient, may exhibit chattering, or introduce substantial synthesis and implementation complexity through real-time optimization or Hamilton-Jacobi-type conditions \cite{Cuk_PI_SMC,mpc_cuk,mpc_sepic,Hinf_cuk}. These recurring tradeoffs motivate the search for a control framework that remains simple to design, robust to multiple disturbance types, and does not require a converter-specific closed-form derivation.

We propose a simple solution to prescribe the closed-loop trajectory we want and then recover the duty-ratio signal by numerically inverting the converter dynamics. As shown in Fig.~\ref{fig:system_inverse}(a), for a given operating condition and desired response, the system inverse computes the control input that drives the converter along the target trajectory \cite{devasia_1996}. Repeating this procedure over many operating points and disturbance scenarios produces labeled state-input trajectories that encode the desired closed-loop behavior.

These trajectories are then used as supervised training data. Fig.~\ref{fig:system_inverse}(b) shows the resulting controller architecture. We train a multilayer perceptron (MLP) to approximate the mapping from measured converter variables to the duty ratio computed by inversion. The neural network therefore does not search over an unrestricted family of control policies. Instead, it learns a structured state-to-duty mapping induced by model-based trajectory generation. In this way, the proposed framework preserves the design logic of trajectory-based nonlinear control while avoiding the need for a topology-specific closed-form synthesis.

In this paper, we apply this framework to output voltage regulation of fourth-order dc-dc converters under load and input-voltage disturbances. We focus on the \'Cuk, SEPIC, and zeta converters and show that inversion-based trajectory generation combined with supervised learning yields simple neural-network controllers for disturbance rejection.

\begin{figure*}[t]
  \centering
\includegraphics[width=1\textwidth]{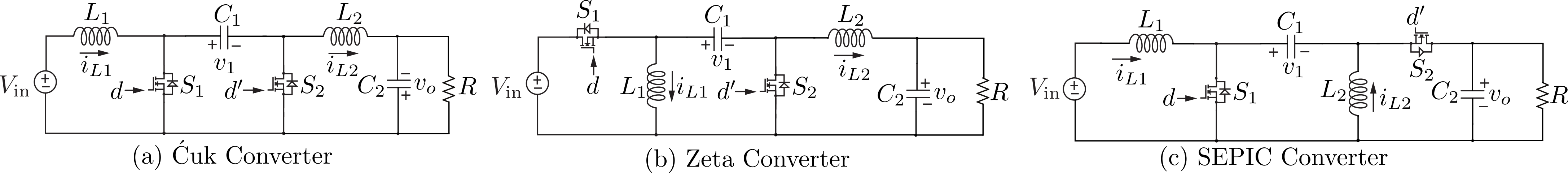}
  \caption{Power stages of ideal basic dc-dc converter.}
  \label{fig:power_stage}
\end{figure*}
 
\section{Inversion}
We choose the inversion output based on the phase properties of the corresponding input-output map. Minimum-phase systems admit bounded causal inverse trajectories, whereas nonminimum-phase systems require noncausal preimages associated with unstable zero dynamics~\cite{devasia_1996}. For the averaged linearized \'Cuk, zeta, and SEPIC converters, the small-signal output voltage is a nonminimum-phase state, while the small-signal input current is a minimum-phase state~\cite{sira2006control}. This property, which appears in many classical dc-dc converters, motivates an indirect feedback strategy: we regulate the input inductor current to its desired equilibrium and rely on the asymptotically stable zero dynamics to govern the internal dynamics and the actual output behavior~\cite{sira2006control}. Accordingly, the input inductor current will be chosen as the regulated state from here forward. 

We establish strong invertibility of the averaged converters in Fig.~\ref{fig:power_stage} using Hirschorn's framework for nonlinear control systems \cite{hirschron_1979}. Consider the single-input input-affine system
\begin{equation}
    \dot x = A(x) + uB(x), \qquad y = c(x),
    \label{eq:input_affine}
\end{equation}
for which, using chain rule,
\begin{equation}
    \dot y = L_A c(x) + u\,L_B c(x),
    \label{eq:output_derivative}
\end{equation}
where compact forms are \(L_A c(x) := \frac{\partial c}{\partial x}A(x)\) and \(L_B c(x) := \frac{\partial c}{\partial x}B(x)\). If \(L_B c(x) \neq 0\) on an open set \(\mathcal{M}\), then the system has relative order \(\alpha=1\), which is the highest order derivative of $y$, on \(\mathcal{M}\) \cite{silverman2003inversion}. As $\alpha < \infty$,  following Hirschorn's Theorem 2, the system is strongly invertible on \(\mathcal{M}\), and the input is recovered from \eqref{eq:output_derivative} as
\begin{equation}
    u = \frac{\dot y - L_A c(x)}{L_B c(x)}.
    \label{eq:inverse_general}
\end{equation}
We apply this criterion with $x$ as converter states, the regulated state as \(y = i_{L_1}\), and control effort as $u= d$.

\subsection{\'Cuk Converter}

\begin{proposition}
Consider the averaged \'Cuk converter
\begin{align*}
\dot y &= -\frac{1-d}{L_1}v_1+\frac{V_{\mathrm{in}}}{L_1},\\
\dot v_1 &= \frac{1-d}{C_1}y+\frac{d}{C_1}i_2,\\
\dot i_2 &= -\frac{d}{L_2}v_1-\frac{v_o}{L_2},\\
\dot v_o &= \frac{i_2}{C_2}-\frac{v_o}{RC_2},
\end{align*}
with state \(x=[y\;\; v_1\;\; i_2\;\; v_o]^\top\) and output \(c(x)=y=i_{L_1}\). Then the system is strongly invertible on
\[
\mathcal{M}_{\mathrm{C}}=\{x\in\mathbb{R}^4 : v_1\neq 0\}.
\]
\end{proposition}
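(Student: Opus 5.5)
We rewrite the averaged \'Cuk model in the input-affine form \eqref{eq:input_affine} and then apply the relative-order criterion of the preceding section. We collect the terms that do not multiply $d$ into the drift $A(x)$ and the coefficients of $d$ into $B(x)$, with $u=d$:
\[
A(x)=\begin{bmatrix} \dfrac{V_{\mathrm{in}}-v_1}{L_1}\\[4pt] \dfrac{y}{C_1}\\[4pt] -\dfrac{v_o}{L_2}\\[4pt] \dfrac{i_2}{C_2}-\dfrac{v_o}{RC_2}\end{bmatrix},\qquad
B(x)=\begin{bmatrix} \dfrac{v_1}{L_1}\\[4pt] \dfrac{i_2-y}{C_1}\\[4pt] -\dfrac{v_1}{L_2}\\[4pt] 0\end{bmatrix}.
\]
Expanding $(1-d)$ in each equation shows directly that $\dot x = A(x)+dB(x)$ reproduces the stated dynamics. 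Both fields are smooth (indeed affine) in $x$ on $\mathbb{R}^4$, so the analytic and smoothness hypotheses of Hirschorn's framework hold.

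Next we compute the Lie derivatives for the output $c(x)=y$. Since $\partial c/\partial x=[1\;0\;0\;0]$, we obtain
\[
L_A c(x)=\frac{V_{\mathrm{in}}-v_1}{L_1},\qquad L_B c(x)=\frac{v_1}{L_1}.
\]
Because $L_1>0$, we have $L_B c(x)\neq 0$ exactly when $v_1\neq 0$. The set $\mathcal{M}_{\mathrm{C}}$ is open, being the preimage of the open set $\mathbb{R}\setminus\{0\}$ under the continuous coordinate map $x\mapsto v_1$. Hence the relative order is $\alpha=1$ at every point of $\mathcal{M}_{\mathrm{C}}$.

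Finally, since $\alpha=1<\infty$ uniformly on $\mathcal{M}_{\mathrm{C}}$, Hirschorn's Theorem 2 gives strong invertibility on $\mathcal{M}_{\mathrm{C}}$. Substituting into \eqref{eq:inverse_general} gives the explicit inverse
\[
d=\frac{L_1\dot y - V_{\mathrm{in}}+v_1}{v_1}.
\]
This formula confirms that the input is uniquely recovered from $\dot y$ and the state. The main point needing care is that Hirschorn's strong invertibility requires the relative order to be constant and finite on an open set, not merely nonzero at isolated points. So the one real check is that $\mathcal{M}_{\mathrm{C}}$ is open and that $L_Bc$ does not vanish anywhere on it; both follow from the computation above.
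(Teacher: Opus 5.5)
Your proof is correct and follows essentially the same route as the paper: the same drift/input split $A(x)$, $B(x)$, the computation $L_Bc(x)=v_1/L_1\neq 0$ on $\mathcal{M}_{\mathrm{C}}$ giving relative order $\alpha=1$, the appeal to Hirschorn's Theorem 2, and the same inverse formula for $d$. Your extra remarks, that $\mathcal{M}_{\mathrm{C}}$ is open and that the vector fields are analytic, are valid details that the paper leaves implicit.
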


\begin{proof}
The system can be written in the form \eqref{eq:input_affine} with
\[
A(x)=
\begin{bmatrix}
-\dfrac{v_1}{L_1}+\dfrac{V_{\mathrm{in}}}{L_1}\\[8pt]
\dfrac{y}{C_1}\\[8pt]
-\dfrac{v_o}{L_2}\\[8pt]
\dfrac{i_2}{C_2}-\dfrac{v_o}{RC_2}
\end{bmatrix},
B(x)=
\begin{bmatrix}
\dfrac{v_1}{L_1}\\[8pt]
\dfrac{-y+i_2}{C_1}\\[8pt]
-\dfrac{v_1}{L_2}\\[8pt]
0
\end{bmatrix},
c(x)=x_1.
\]
Since \(c(x)=x_1\),
\[
L_B c(x)=\frac{\partial c}{\partial x}B(x)=\begin{bmatrix}1&0&0&0\end{bmatrix}B(x)=\frac{v_1}{L_1}.
\]
Therefore, \(L_B c(x)\neq 0\) on \(\mathcal{M}_{\mathrm{C}}\), and the system has relative order \(\alpha=1\) on \(\mathcal{M}_{\mathrm{C}}\). By Hirschorn's Theorem 2, the averaged \'Cuk converter is strongly invertible on \(\mathcal{M}_{\mathrm{C}}\).

Moreover,
\[
\dot y = L_A c(x) + u\,L_B c(x)
      = -\frac{v_1}{L_1}+\frac{V_{\mathrm{in}}}{L_1}+\frac{v_1}{L_1}d,
\]
and the inverse input is
\[
d = \frac{L_1\dot y+v_1-V_{\mathrm{in}}}{v_1},
\]
which is well-defined for all \(x\in\mathcal{M}_{\mathrm{C}}\).
\end{proof}

\subsection{Zeta Converter}
\begin{proposition}
For the averaged zeta converter
\begin{align*}
\dot y &= \frac{1}{L_1}\bigl((1-d)v_1+dV_{\mathrm{in}}\bigr),\\
\dot i_2 &= \frac{1}{L_2}\bigl(-dv_1-v_o+dV_{\mathrm{in}}\bigr),\\
\dot v_1 &= \frac{1}{C_1}\bigl((d-1)y+di_2\bigr),\\
\dot v_o &= \frac{1}{C_2}\left(i_2-\frac{v_o}{R}\right),
\end{align*}
with \(x=[y\;\; i_2\;\; v_1\;\; v_o]^\top\) and \(c(x)=y=i_{L_1}\), the system is strongly invertible on
\[
\mathcal{M}_{\mathrm{Z}}=\{x\in\mathbb{R}^4: v_1\neq V_{\mathrm{in}}\}.
\]
\end{proposition}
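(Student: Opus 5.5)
The plan is to follow exactly the template used for the \'Cuk converter: put the averaged zeta dynamics into the input-affine form \eqref{eq:input_affine}, compute the Lie derivative $L_B c(x)$ for the output $c(x)=x_1=y$, and check that it is nonzero precisely on $\mathcal{M}_{\mathrm{Z}}$. Relative order $\alpha=1$ and Hirschorn's Theorem 2 then give strong invertibility, and \eqref{eq:inverse_general} gives an explicit inverse.

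First I will separate each right-hand side into its $d$-independent and $d$-multiplied parts, using the state ordering $x=[y\;\; i_2\;\; v_1\;\; v_o]^\top$. For the first equation, $(1-d)v_1+dV_{\mathrm{in}} = v_1 + d(V_{\mathrm{in}}-v_1)$. Similar splits for the other equations give
\[
A(x)=\begin{bmatrix} \dfrac{v_1}{L_1}\\[6pt] -\dfrac{v_o}{L_2}\\[6pt] -\dfrac{y}{C_1}\\[6pt] \dfrac{i_2}{C_2}-\dfrac{v_o}{RC_2}\end{bmatrix},\qquad
B(x)=\begin{bmatrix} \dfrac{V_{\mathrm{in}}-v_1}{L_1}\\[6pt] \dfrac{V_{\mathrm{in}}-v_1}{L_2}\\[6pt] \dfrac{y+i_2}{C_1}\\[6pt] 0\end{bmatrix}.
\]
Treating $V_{\mathrm{in}}$ as a constant parameter, as in the \'Cuk case, the system is input-affine with $u=d$.

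Since $\partial c/\partial x=[1\;0\;0\;0]$, we get $L_B c(x)=(V_{\mathrm{in}}-v_1)/L_1$ and $L_A c(x)=v_1/L_1$. Because $L_1>0$, $L_B c(x)\neq 0$ exactly when $v_1\neq V_{\mathrm{in}}$. This is the defining condition of $\mathcal{M}_{\mathrm{Z}}$, which is an open set as the complement of a hyperplane. Hence the relative order is $\alpha=1$ on $\mathcal{M}_{\mathrm{Z}}$, and Hirschorn's Theorem 2 gives strong invertibility there. Substituting into \eqref{eq:inverse_general} yields
\[
d=\frac{L_1\dot y - v_1}{V_{\mathrm{in}}-v_1},
\]
which is well defined on $\mathcal{M}_{\mathrm{Z}}$.

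There is no real obstacle. The only step needing care is the algebraic split of the first equation into $A$ and $B$. That split is what places the singular set at $v_1=V_{\mathrm{in}}$ rather than at $v_1=0$, as it was for the \'Cuk converter.
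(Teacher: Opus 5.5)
Your proposal is correct and follows the paper's proof essentially line for line. It uses the same drift/input split $A(x)$, $B(x)$, the same decoupling term $L_Bc(x)=(V_{\mathrm{in}}-v_1)/L_1$ giving relative order $\alpha=1$ on $\mathcal{M}_{\mathrm{Z}}$, the same appeal to Hirschorn's Theorem 2, and the same explicit inverse $d=(L_1\dot y-v_1)/(V_{\mathrm{in}}-v_1)$, with your remarks that $L_1>0$ and that $\mathcal{M}_{\mathrm{Z}}$ is open adding small details the paper leaves implicit.
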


\begin{proof}
Separating the drift and input vector fields yields
\[
A(x)=
\begin{bmatrix}
\dfrac{v_1}{L_1}\\[8pt]
-\dfrac{v_o}{L_2}\\[8pt]
-\dfrac{y}{C_1}\\[8pt]
\dfrac{i_2}{C_2}-\dfrac{v_o}{RC_2}
\end{bmatrix},
B(x)=
\begin{bmatrix}
\dfrac{V_{\mathrm{in}}-v_1}{L_1}\\[8pt]
\dfrac{V_{\mathrm{in}}-v_1}{L_2}\\[8pt]
\dfrac{y+i_2}{C_1}\\[8pt]
0
\end{bmatrix},
c(x)=x_1.
\]
For this output,
\[
L_Bc(x)
=\begin{bmatrix}1&0&0&0\end{bmatrix}B(x)
=\frac{V_{\mathrm{in}}-v_1}{L_1}.
\]
Thus the only loss of invertibility occurs when \(v_1=V_{\mathrm{in}}\). Away from that set, that is, on \(\mathcal{M}_{\mathrm{Z}}\), the decoupling term remains nonzero and the relative order is again \(\alpha=1\). Hirschorn's Theorem 2 therefore guarantees strong invertibility on \(\mathcal{M}_{\mathrm{Z}}\). Evaluating the output derivative gives
\[
\dot y=L_Ac(x)+uL_Bc(x)
=\frac{v_1}{L_1}+\frac{V_{\mathrm{in}}-v_1}{L_1}d,
\]
so solving for the duty ratio gives
\[
d=\frac{L_1\dot y-v_1}{V_{\mathrm{in}}-v_1}.
\]
The denominator is nonzero precisely on \(\mathcal{M}_{\mathrm{Z}}\), which completes the proof.
\end{proof}

\subsection{SEPIC Converter}
\begin{proposition}
Consider the averaged SEPIC converter
\begin{align*}
\dot y &= -\frac{1-d}{L_1}v_1-\frac{1-d}{L_1}v_o+\frac{V_{\mathrm{in}}}{L_1},\\
\dot i_2 &= \frac{d}{L_2}v_1-\frac{1-d}{L_2}v_o,\\
\dot v_1 &= \frac{1-d}{C_1}y-\frac{d}{C_1}i_2,\\
\dot v_o &= \frac{1-d}{C_2}y+\frac{1-d}{C_2}i_2-\frac{1}{RC_2}v_o,
\end{align*}
with state \(x=[y\;\; i_2\;\; v_1\;\; v_o]^\top\) and output \(c(x)=y=i_{L_1}\). Then strong invertibility holds on
\[
\mathcal{M}_{\mathrm{S}}=\{x\in\mathbb{R}^4: v_1+v_o\neq 0\}.
\]
\end{proposition}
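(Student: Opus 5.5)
The argument follows the two preceding propositions. First I rewrite the averaged SEPIC dynamics in the input-affine form \eqref{eq:input_affine} by collecting the terms multiplied by $d$. Expanding $(1-d)$ in each equation gives the drift and input vector fields
\[
A(x)=
\begin{bmatrix}
\dfrac{V_{\mathrm{in}}-v_1-v_o}{L_1}\\[8pt]
-\dfrac{v_o}{L_2}\\[8pt]
\dfrac{y}{C_1}\\[8pt]
\dfrac{y+i_2}{C_2}-\dfrac{v_o}{RC_2}
\end{bmatrix},\qquad
B(x)=
\begin{bmatrix}
\dfrac{v_1+v_o}{L_1}\\[8pt]
\dfrac{v_1+v_o}{L_2}\\[8pt]
-\dfrac{y+i_2}{C_1}\\[8pt]
-\dfrac{y+i_2}{C_2}
\end{bmatrix},
\]
with output $c(x)=x_1$. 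I will check each component against the stated equations. For example, the $i_2$ equation gives $\frac{d v_1 - v_o + d v_o}{L_2}$, which splits into the drift term $-\frac{v_o}{L_2}$ plus $d$ times $\frac{v_1+v_o}{L_2}$.

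Next I compute the decoupling term. Since $\partial c/\partial x=[1\;0\;0\;0]$, we get $L_Bc(x)=\frac{v_1+v_o}{L_1}$. This is nonzero exactly on $\mathcal{M}_{\mathrm{S}}$, so the relative order is $\alpha=1$ there. Hirschorn's Theorem 2, applied as in the \'Cuk and zeta cases, then gives strong invertibility on $\mathcal{M}_{\mathrm{S}}$.

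Finally, as a constructive complement, I use \eqref{eq:output_derivative}:
\[
\dot y=\frac{V_{\mathrm{in}}-v_1-v_o}{L_1}+\frac{v_1+v_o}{L_1}\,d .
\]
Solving this via \eqref{eq:inverse_general} gives
\[
d=\frac{L_1\dot y+v_1+v_o-V_{\mathrm{in}}}{v_1+v_o}.
\]
This expression is well-defined precisely on $\mathcal{M}_{\mathrm{S}}$.

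The only real obstacle is bookkeeping: splitting the $(1-d)$ factors correctly so that no $d$-dependence is left in $A(x)$. Only the first component of $B$ actually matters for the criterion, so I will verify that component most carefully. The other components are recorded only for completeness.
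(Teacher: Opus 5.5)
Your proposal is correct and follows the paper's proof essentially line for line: you use the same drift/input split $A(x)$, $B(x)$, the same decoupling term $L_Bc(x)=(v_1+v_o)/L_1$ that is nonzero on $\mathcal{M}_{\mathrm{S}}$, the same appeal to Hirschorn's Theorem 2 with relative order $\alpha=1$, and the same explicit inverse $d=(L_1\dot y+v_1+v_o-V_{\mathrm{in}})/(v_1+v_o)$. Your expansion of the $(1-d)$ factors is accurate in every component.
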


\begin{proof}
For the SEPIC model, the control coefficient in \(\dot y\) suggests the relevant nondegeneracy condition. Express the system in the form \eqref{eq:input_affine} to get
\[
A(x)=
\begin{bmatrix}
-\dfrac{v_1+v_o}{L_1}+\dfrac{V_{\mathrm{in}}}{L_1}\\[8pt]
-\dfrac{v_o}{L_2}\\[8pt]
\dfrac{y}{C_1}\\[8pt]
\dfrac{y+i_2}{C_2}-\dfrac{v_o}{RC_2}
\end{bmatrix},
B(x)=
\begin{bmatrix}
\dfrac{v_1+v_o}{L_1}\\[8pt]
\dfrac{v_1+v_o}{L_2}\\[8pt]
-\dfrac{y+i_2}{C_1}\\[8pt]
-\dfrac{y+i_2}{C_2}
\end{bmatrix}, \\
\]
$c(x)=x_1.$ Therefore,
\[
L_Bc(x)
=\begin{bmatrix}1&0&0&0\end{bmatrix}B(x)
=\frac{v_1+v_o}{L_1}.
\]
The decoupling term is nonzero whenever \(v_1+v_o\neq 0\), namely on \(\mathcal{M}_{\mathrm{S}}\). Hence the output has relative order \(\alpha=1\) throughout that domain, and Hirschorn's Theorem 2 establishes strong invertibility there. Substituting into the input-output relation gives
\[
\dot y=L_Ac(x)+uL_Bc(x)
=-\frac{v_1+v_o}{L_1}+\frac{V_{\mathrm{in}}}{L_1}
+\frac{v_1+v_o}{L_1}d.
\]
Solving for \(d\) yields
\[
d=\frac{L_1\dot y+v_1+v_o-V_{\mathrm{in}}}{v_1+v_o},
\]
which is valid for every \(x\in\mathcal{M}_{\mathrm{S}}\).
\end{proof}

\section{Control Design}

\begin{figure*}[t]
  \centering
\includegraphics{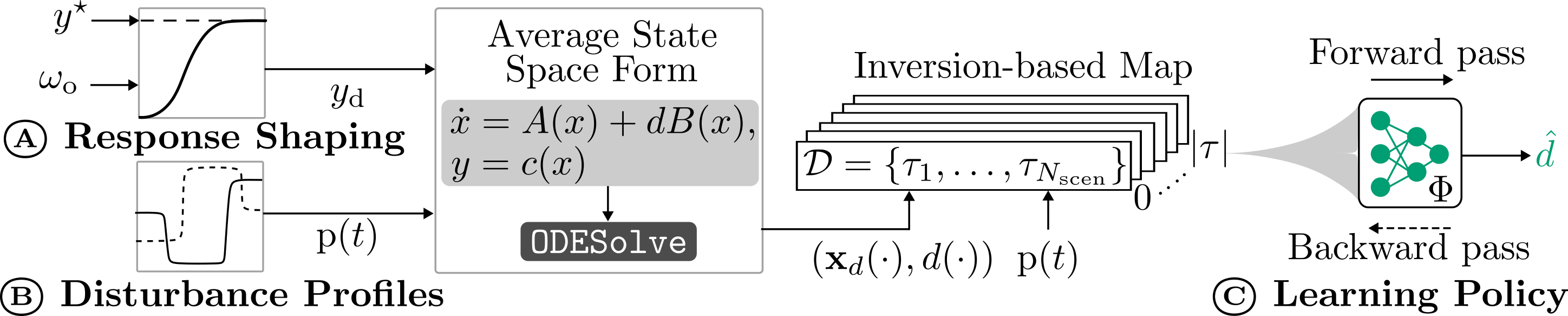}
    \caption{Inversion-based pipeline for learning a desired state-to-duty policy. Given $y^\star$ and $\omega_{\mathrm{o}}$, the inversion dynamics are solved for LHS sampled $(R,V_{\mathrm{in}},x_0)$ to generate trajectory data, $\mathcal{D}=\{\tau_1,\ldots,\tau_{{N_{\text{scen}}}}\}$, which trains a MLP, $\Phi$, to output $\hat{d}$.}
  \label{fig:data_gen_overview}
\end{figure*}

This section shows how data is generated and used to train the inversion-based controller (see Fig.~\ref{fig:data_gen_overview}). 

\subsection{Response Shaping}
We start with a canonical form  as a design template to benefit from classical linear-system theory to impose desirable transient behavior. For the averaged $4^{th}$order converter models we shape the output trajectory using a fourth-order critically damped reference transfer function, with a desired natural frequency, $\omega_0>0$, and amplitude $y^\star$, 
\begin{equation}
    G(s)=\frac{y^\star\omega_0^4}{(s+\omega_0)^4}.
    \label{eq:tf_G}
\end{equation}
 This form is selected to produce a smooth transient response with little overshoot. We apply a unit-step reference, $1/s$, at the input and taking the inverse Laplace transform yields the desired time-domain response 
\begin{align}
    y_d(t)
    &=y^\star\left(1-e^{-\omega_0 t}\left(1+\omega_0 t+\frac{(\omega_0 t)^2}{2!}+\frac{(\omega_0 t)^3}{3!}\right)\right).
    \label{eq:step_response1}
\end{align}

\subsection{Disturbance Profiles}
We model smooth transitions in the operating-point, $p(t)\in\{V_{\mathrm{in}}(t),\,R(t)\}$,  as 
\begin{align*}
s(t) &= \tfrac{1}{2}\left(1+\tanh\!\big(\kappa(t-t_s)\big)\right), \\
p(t) &= (1-s(t))\,p^{-}+s(t)\,p^{+}, \\
\dot p(t) &= \frac{\kappa}{2}\,\mathrm{sech}^2\!\big(\kappa(t-t_s)\big)\,\Delta p.
\end{align*}

Here, $\kappa>0$ determines how abrupt the transition is. As $\kappa$ increases, the profile becomes closer to a step input, whereas smaller values of $\kappa$ produce a more gradual change over a wider time interval. In real-time, changes in $R$ and $V_\text{in}$ are not ideal step functions, and observers are typically used to estimate their values.

\subsection{Learning State-to-duty Policy}

We use the averaged converter model in Section~II to generate desired state and duty-ratio trajectories by inversion. Let the exogenous input vector be $\mathbf{p}(t)=\left[ V_{\mathrm{in}}(t)\;\; R(t)\right]^{\top}$.

For a chosen desired output trajectory \(y_d(\cdot)\), inversion yields a desired state-input pair
\((\mathbf{x}_d(\cdot),d(\cdot))\) that satisfies the converter dynamics and output relation. We train a neural policy to approximate this inversion-generated duty ratio with the static state-parameter map
\begin{equation}
\hat{d}(t)=\pi_{\Phi}\!\bigl(\mathbf{x}(t),\mathbf{p}(t)\bigr),
\end{equation}
where \(\Phi\) denotes the network weights and bias. The policy, $\pi$, uses the current state \(\mathbf{x}(t)\) and exogenous variables \(\mathbf{p}(t)\) to predict the control action \(\hat{d}(t)\approx d(t)\).

We construct a finite dataset of inversion-generated trajectories, $\mathcal{D}=\{\tau_1,\ldots,\tau_{{N_{\text{scen}}}}\},$ where each trajectory has the form
$\tau=
\bigl\{
(\mathbf{x}_0,\mathbf{p}_0,d_0),\ldots,
(\mathbf{x}_{|\tau|},\mathbf{p}_{|\tau|},d_{|\tau|})
\bigr\},$ with $|\tau|=N_t,$ sampling instants
$t_0<t_1<\cdots<t_{|\tau|}.$ Here, \(\mathbf{x}_i\) denotes the state at time \(t_i\), \(\mathbf{p}_i\) denotes the corresponding exogenous input vector, and \(d_i\) denotes the inversion-generated duty ratio. With this dataset \(\mathcal{D}\), we train the multilayer perceptron by minimizing
\begin{equation}
\mathcal{L}(\Phi,\mathcal{D})
=
\sum_{\tau\in\mathcal{D}}
\sum_{(\mathbf{x}_i,\mathbf{p}_i,d_i)\in\tau}
\left\|
\pi_{\Phi}(\mathbf{x}_i,\mathbf{p}_i)-d_i
\right\|^2.
\end{equation}
We minimize \(\mathcal{L}(\Phi,\mathcal{D})\) with gradient-based optimization and then use the trained network to compute the control effort, $\hat{d}$, from the measured state and exogenous variables, as shown in Fig.~\ref{fig:data_gen_overview}.

\begin{figure*}[!h]
\centerline{\includegraphics{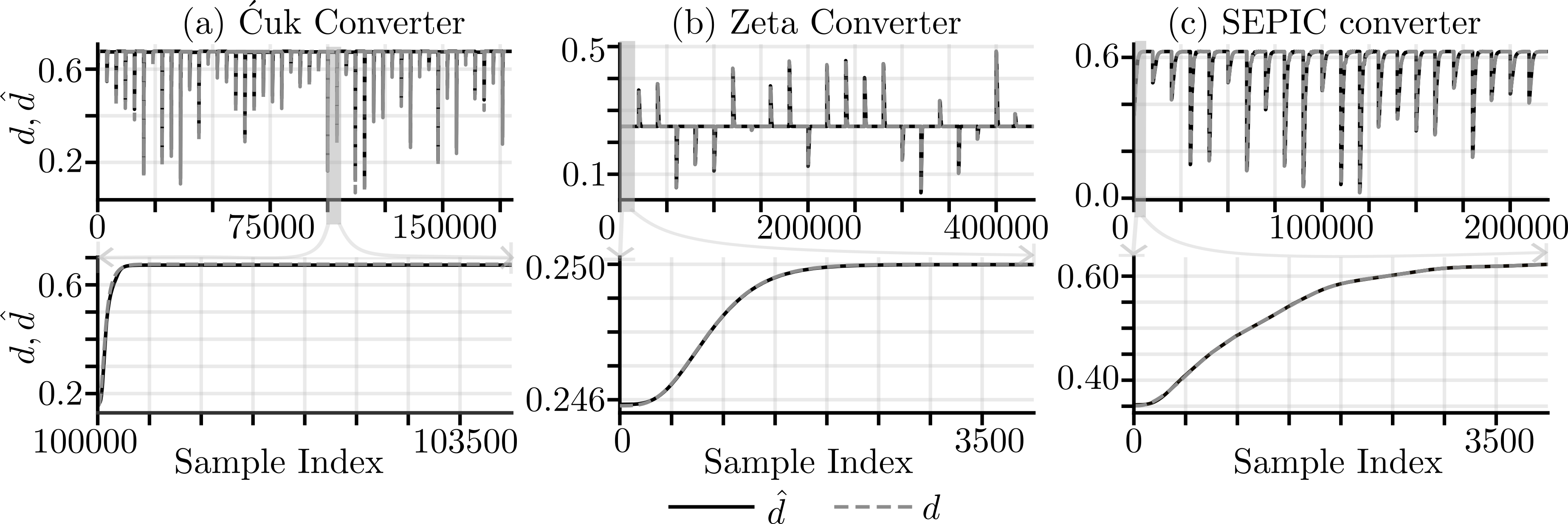}}
\caption{\underline{Validation} - Neural-network validation of duty-cycle prediction for the (a) \'{C}uk, (b) Zeta, and (c) SEPIC converters. The top row compares $\hat{d}$ (solid) and $d$ (dashed) over representative validation trajectories, while the bottom row shows zoomed views of the highlighted regions, demonstrating close agreement of the learned state-to-duty map.}
\label{validation}
\end{figure*}

\begin{figure*}[!h]
\centering
\includegraphics{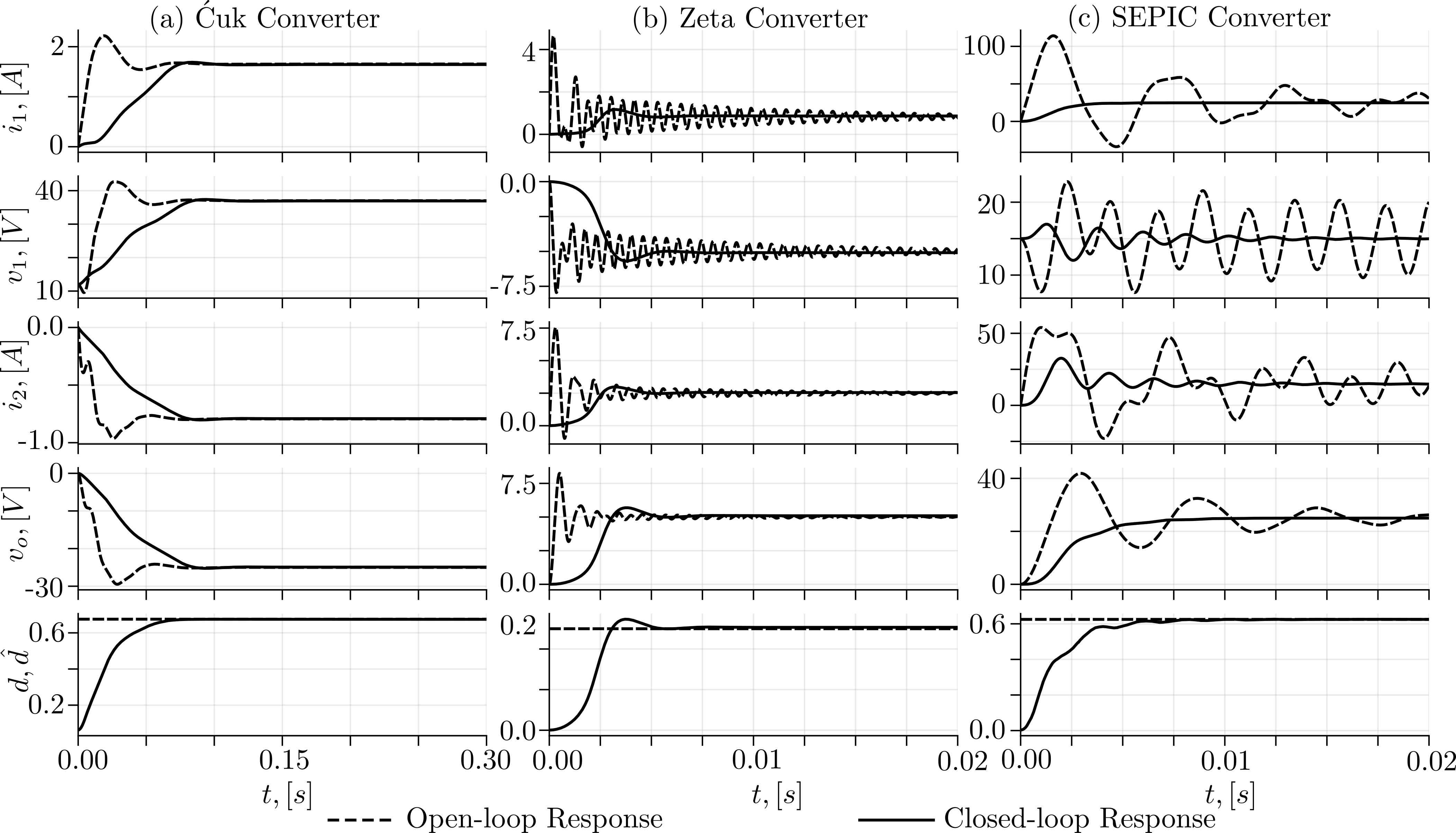}
\caption{\underline{Case \#1} - Open-loop (dashed) and  closed-loop (solid) responses of the states and duty for the three converters. Compared with open-loop, the controller drives the states to their equilibria with faster settling and reduced oscillation.}
\label{Robust_1}
\end{figure*}

\begin{figure*}[!htbp]
\centering
\includegraphics{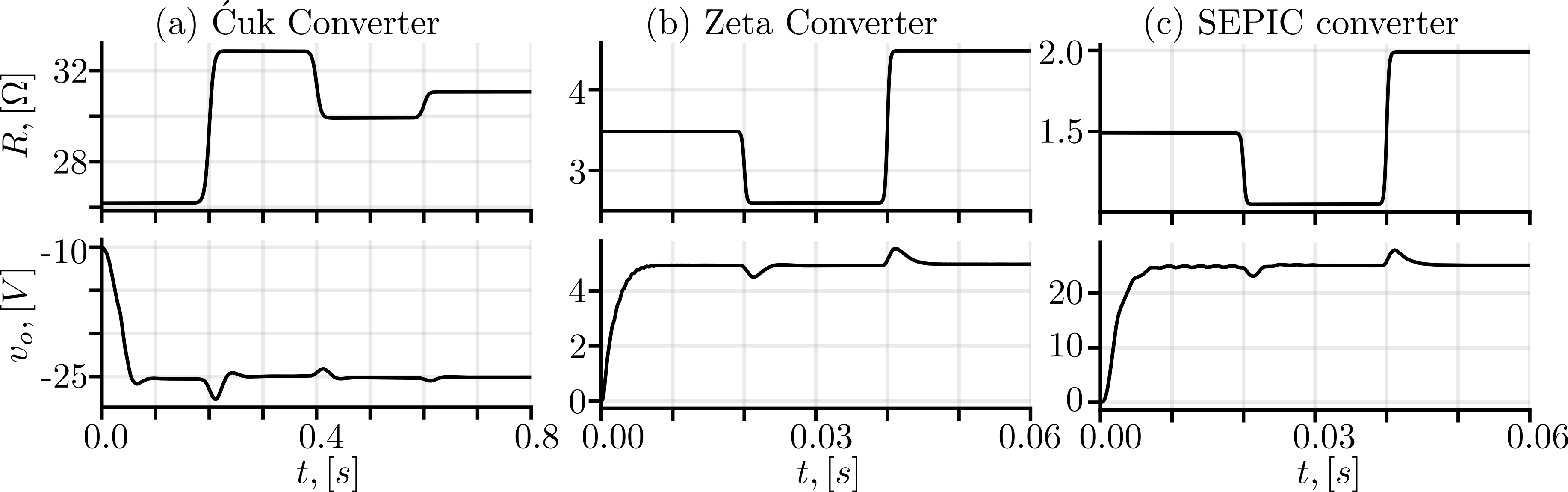}
 \caption{\underline{Case \#2} - Closed-loop robustness to load disturbances for (a) \'{C}uk, (b) Zeta, and (c) SEPIC converters. The top row applies smooth load steps $R(t)$, and the bottom row shows the corresponding output-voltage response $v_o(t)$. 
 Despite abrupt changes in $R$, the neural network controller maintains regulation with small deviations and rapid recovery, demonstrating robustness to load variation across converter topologies a constant nominal input voltage.}
\label{Robust_2}
\end{figure*}

\begin{figure*}[!htbp]
\centering
\includegraphics{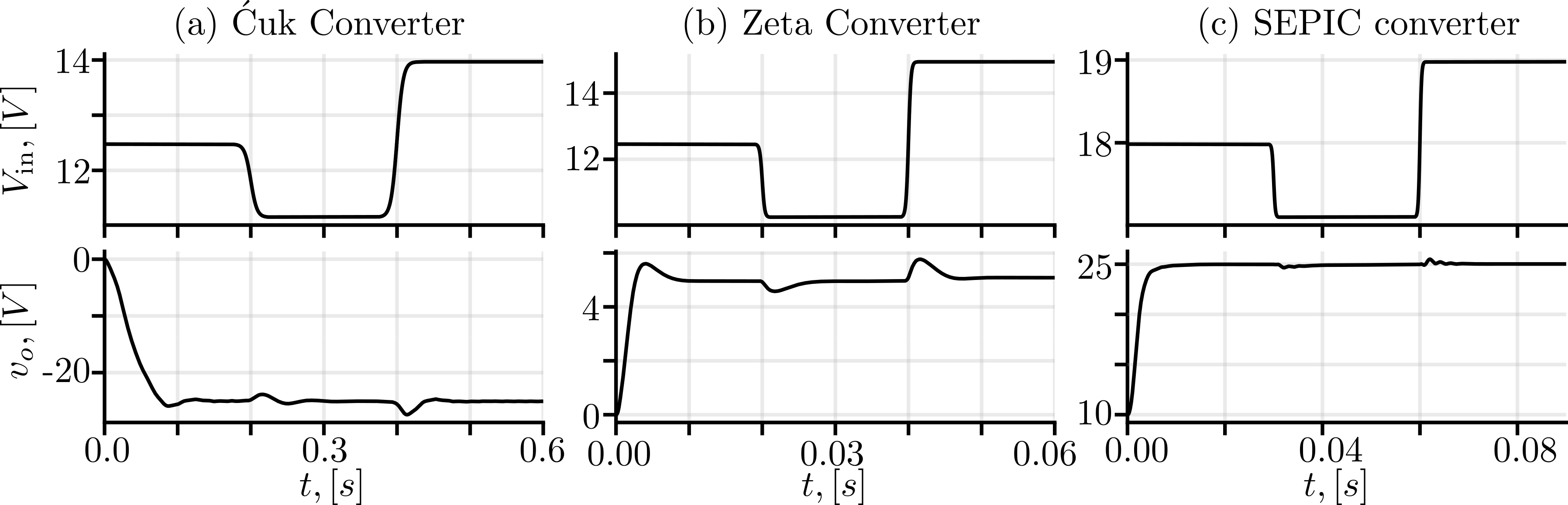}
 \caption{\underline{Case \#3} - Closed-loop robustness to input voltage disturbances for (a) \'{C}uk, (b) Zeta, and (c) SEPIC converters. 
 The top row applies smooth steps of $V_\text{in}(t)$, and the bottom row shows the corresponding output-voltage response $v_o(t)$. Despite abrupt changes in $V_\text{in}(t)$, the neural network controller maintains regulation with small deviations and rapid recovery, demonstrating robustness to input voltage variation across converter topologies with a constant nominal load.}
\label{Robust_3}
\end{figure*}

\begin{figure*}[!htbp]
\centering
\includegraphics{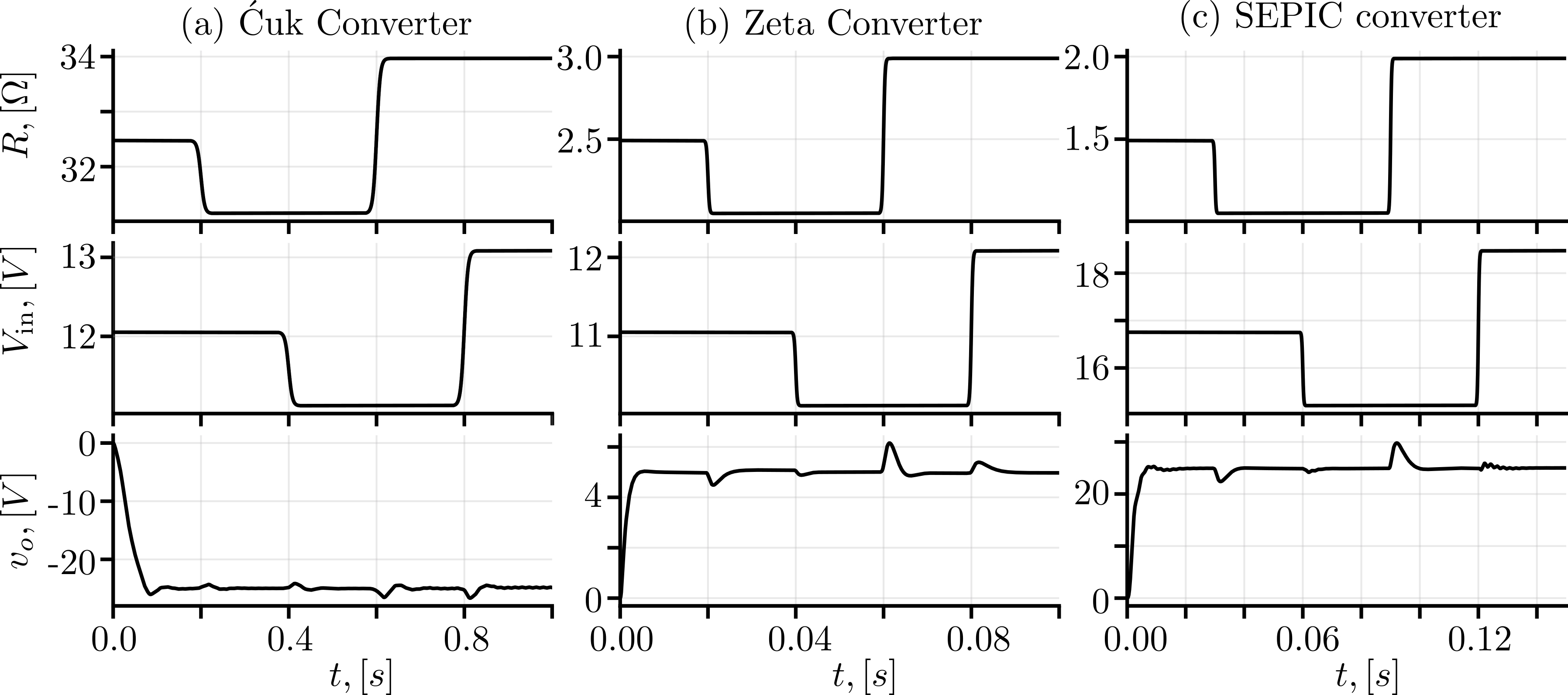}
 \caption{\underline{Case \#4} - Closed-loop robustness to load and input voltage disturbances for (a) \'{C}uk, (b) Zeta, and (c) SEPIC converters. The top two rows apply smooth load and supply voltage steps $(R(t), V_\text{in}(t))$, and the bottom row shows the corresponding output-voltage response $v_o(t)$. 
 Despite large, abrupt changes in $R(t)$ and $V_\text{in}(t)$, the neural network controller maintains regulation with small deviations and rapid recovery, demonstrating robustness to load and input voltage variation across converter topologies.}
\label{Robust_4}
\end{figure*}

 \section{Application to DC-DC Converters} \label{sec:System Description}
\'Cuk, zeta, and SEPIC are dc-dc converters which can step-up or step-down the output voltage. \'Cuk inverts the polarity of the output while zeta and SEPIC are non-inverting. 
\subsection{Bounds}
Table~\ref{tab:eq_ic_relations} summarizes the topology-specific algebraic relations used to define the bounds for the equilibrium point and the initial conditions. In particular, it gives the expressions for computing the equilibrium states and duty ratio from \(R\), \(V_{\text{in}}\), and \(v_o^\star\), as well as the corresponding relations for initializing the states and duty ratio from \(R\), \(V_{\text{in}}\), and \(v_o(0)\).

\begin{table*}[t]
  \centering
  \caption{Equilibrium and initial-condition relations for the \'{C}uk, zeta, and SEPIC converters.}
  \label{tab:eq_ic_relations}
  \renewcommand{\arraystretch}{1.35}
  \footnotesize
  \begin{tabular}{lcccccccccc}
    \toprule
    & \multicolumn{5}{c}{Equilibrium relations} & \multicolumn{5}{c}{Initial-condition relations} \\
    \cmidrule(lr){2-6} \cmidrule(lr){7-11}
    Topology
      & \makecell{$y^\star=i_{L_1}^\star$}
      & \makecell{$i_2^\star$}
      & \makecell{$v_1^\star$}
      & \makecell{$v_o^\star$}
      & \makecell{$d^\star$}
      & \makecell{$y(0)=i_{L_1}(0)$}
      & \makecell{$i_2(0)$}
      & \makecell{$v_1(0)$}
      & \makecell{$v_o(0)$}
      & \makecell{$d(0)$} \\
    \midrule
    \'{C}uk
      & $\dfrac{(v_o^\star)^2}{R\,V_{\text{in}}}$
      & $-\dfrac{v_o^\star}{R}$
      & $V_{\text{in}}+v_o^\star$
      & $-v_o^\star$
      & $\dfrac{v_o^\star}{V_{\text{in}}+v_o^\star}$
      & $\dfrac{(v_o(0))^2}{R\,V_{\text{in}}}$
      & $-\dfrac{v_o(0)}{R}$
      & $V_{\text{in}}+v_o(0)$
      & $-v_o(0)$
      & $\dfrac{v_o(0)}{V_{\text{in}}+v_o(0)}$ \\
    \addlinespace
    Zeta
      & $\dfrac{(v_o^\star)^2}{R\,V_{\text{in}}}$
      & $\dfrac{v_o^\star}{R}$
      & $-v_o^\star$
      & $v_o^\star$
      & $\dfrac{v_o^\star}{V_{\text{in}}+v_o^\star}$
      & $\dfrac{(v_o(0))^2}{R\,V_{\text{in}}}$
      & $\dfrac{v_o(0)}{R}$
      & $-v_o(0)$
      & $v_o(0)$
      & $\dfrac{v_o(0)}{V_{\text{in}}+v_o(0)}$ \\
    \addlinespace
    SEPIC
      & $\dfrac{(v_o^\star)^2}{R\,V_{\text{in}}}$
      & $\dfrac{v_o^\star}{R}$
      & $V_{\text{in}}$
      & $v_o^\star$
      & $\dfrac{v_o^\star}{V_{\text{in}}+v_o^\star}$
      & $\dfrac{(v_o(0))^2}{R\,V_{\text{in}}}$
      & $\dfrac{v_o(0)}{R}$
      & $V_{\text{in}}$
      & $v_o(0)$
      & $\dfrac{v_o(0)}{V_{\text{in}}+v_o(0)}$ \\
    \bottomrule
  \end{tabular}
\end{table*}

\subsection{Trajectory Planning}
For response shaping
of the \'{C}uk, zeta, and SEPIC converters, we choose the desired output as
\begin{equation}
y(t)=i_{L_1}(t),
\end{equation}
and use the common equilibrium relation
\begin{equation}
y^\star = \frac{(v_o^\star)^2}{R\,V_{\text{in}}}.
\end{equation}
To construct a smooth current reference, define the critically damped fourth-order step factor
\begin{equation}
\phi(t)=1-e^{-\omega_0 t}\left(1+\omega_0 t+\frac{(\omega_0 t)^2}{2!}+\frac{(\omega_0 t)^3}{3!}\right).
\end{equation}
Let $Y(t) = \frac{(v_o^\star)^2}{R(t)\,V_{\text{in}}(t)}$ so the current trajectory is
\begin{equation}
y(t)=Y(t)\phi(t),
\end{equation}
\begin{equation}
\dot{y}(t)=\dot{Y}(t)\phi(t)+Y(t)\dot{\phi}(t).
\end{equation}
\textit{\underline{Case \#1.} Constant load resistance and input voltage:}
\begin{equation}
\begin{aligned}
y(t) &= \frac{(v_o^\star)^2}{{R(t)\,V_{\text{in}}(t)}}\,\phi(t), \qquad
\dot{y}(t) = \frac{(v_o^\star)^2}{{R(t)\,V_{\text{in}}(t)}}\,\dot{\phi}(t).
\end{aligned}
\end{equation}
\textit{\underline{Case \#2.} Varying load and constant input voltage:}
\begin{equation}
\begin{aligned}
y(t) &= \frac{(v_o^\star)^2}{R(t)\,V_{\text{in}}}\,\phi(t), \\
\dot{y}(t) &=
-\frac{(v_o^\star)^2\,\dot{R}(t)}{V_{\text{in}}\,R(t)^2}\,\phi(t)
+\frac{(v_o^\star)^2}{R(t)\,V_{\text{in}}}\,\dot{\phi}(t).
\end{aligned}
\end{equation}
\textit{\underline{Case \#3.} Constant load and varying input voltage:}
\begin{equation}
y(t)=\frac{(v_o^\star)^2}{R\,V_{\text{in}}(t)}\,\phi(t), \nonumber
\end{equation} 
\begin{equation}
\dot{y}(t)=
-\frac{(v_o^\star)^2\,\dot{V}_{\text{in}}(t)}{R\,V_{\text{in}}(t)^2}\,\phi(t)
+\frac{(v_o^\star)^2}{R\,V_{\text{in}}(t)}\,\dot{\phi}(t).
\end{equation}

\textit{\underline{Case \#4.} Varying load resistance and input voltage:}
\begin{equation}
y(t)=\frac{(v_o^\star)^2}{R(t)\,V_{\text{in}}(t)}\,\phi(t), \nonumber
\end{equation}
\begin{equation}
\begin{split}
\dot{y}(t)=&
-\frac{(v_o^\star)^2\left(\dot{R}(t)\,V_{\text{in}}(t)+R(t)\,\dot{V}_{\text{in}}(t)\right)}
{\left(R(t)\,V_{\text{in}}(t)\right)^2}\,\phi(t) \\
&+\frac{(v_o^\star)^2}{R(t)\,V_{\text{in}}(t)}\,\dot{\phi}(t).
\end{split}
\end{equation}

\section{Numerical Results}\label{sec:results}


\subsection{Configuration}

We generate a model-based dataset by evaluating the dynamical equations described in Section II, bounds, and trajectories in Section IV over a broad set of operating conditions and transients. For each topology, we specify a reference $\omega_0$, sample operating points, initial conditions, and disturbance profiles (including changes in $V_{\mathrm{in}}$ and $R$), and construct smooth transitions using the shaping parameter $\alpha$. Tables~\ref{tab:nominal_params} lists the nominal parameter values. Tables~\ref{tab:cuk_case_settings}--\ref{tab:sepic_case_settings} list the number of scenarios, $N_\text{scen}$, resolution $N_t$, different disturbance ranges, switching times and simulation horizon $T_\text{sim}$. Sampling across the uncertainty space uses latin hypercube sampling to promote uniform coverage in each dimension \cite{mckay2000comparison}. The resulting data are stored as input output pairs \((z,d)\), where the regressor collects the signals available to the controller and the operating conditions. 
In our simulations we use
\[
z=\begin{bmatrix} i_{L1} & v_1 &  i_{L2} & v_o&V_{\mathrm{in}} & R \end{bmatrix}^{\top}, \qquad d\in[0,1].
\]
The full dataset is split into training and validation subsets using a 70\%/30\% partition. The duty map is represented by a fully connected multilayer perceptron. The configurations used for the reported results are given in Table~\ref{tab:nn_settings_all}, including the activation function, layer widths, batch sizes and epochs. Model training minimizes a mean-squared error loss between the network output and the duty labels with the Adam optimizer. Loss is in the order of $10^{-6}$ for \'Cuk and SEPIC while 0 for zeta.  A constant learning rate of $10^{-6}$ is used for all cases except for Case \#3 and Case \#4 of SEPIC where it was $10^{-7}$. All models were implemented in PyTorch.

\subsection{Inference}
After training, the learned controller is evaluated by simulating the averaged converter in closed loop with fixed-step fourth-order Runge-Kutta integration. Fig.~\ref{validation} compares the predicted duty \(\hat{d}\) with the model-generated \(d\) on held-out \'{C}uk, Zeta, and SEPIC trajectories, showing close agreement of the learned state-to-duty map. Fig.~\ref{Robust_1} then contrasts open-loop and closed-loop responses, where the MLP drives the states to the desired equilibria with faster settling and reduced oscillation. Figs.~\ref{Robust_2}--\ref{Robust_4} examine robustness under four operating scenarios: nominal conditions, smooth load steps in \(R(t)\), smooth input-voltage steps in \(V_{\mathrm{in}}(t)\), and simultaneous variations in both. Across all three converter topologies, the closed-loop responses preserve output-voltage regulation, exhibit only small deviations during disturbances, and recover quickly. These results confirm accurate inference of the inversion-based control law and disturbance rejection across distinct higher-order converter dynamics.

\begin{table*}[t]
  \centering
  \caption{Neural network architecture and training settings for the three converters.}
  \label{tab:nn_settings_all}
  \renewcommand{\arraystretch}{1.15}
  \setlength{\tabcolsep}{5pt}
  \footnotesize
  \begin{tabular}{llcccc}
    \toprule
    Converter & Config 
    & Case \#1
    & Case \#2
    & Case \#3
    & Case \#4 \\
    \midrule

    \multirow{4}{*}{\'{C}uk}
      & Neurons/layer
        & 8/4/2
        & 8/4/2
        & 8/8/8
        & 8/8/8 \\
      & Activations
        & \texttt{sigmoid/sigmoid/sigmoid}
        & \texttt{sigmoid/sigmoid/sigmoid}
        & \texttt{gelu/gelu/sigmoid}
        & \texttt{gelu/gelu/sigmoid} \\
      & Batch size
        & 64
        & 64
        & 32
        & 32 \\
      & Epochs
        & 500
        & 1408
        & 64
        & 64 \\
    \midrule

    \multirow{4}{*}{Zeta}
      & Neurons/layer
        & 128/64/32
        & 128/64/32
        & 128/64/32
        & 64/64/32 \\
      & Activations
        & \texttt{sigmoid/tanh/sigmoid}
        & \texttt{sigmoid/tanh/sigmoid}
        & \texttt{silu/silu/sigmoid}
        & \texttt{gelu/gelu/sigmoid} \\
      & Batch size
        & 32
        & 32
        & 32
        & 8 \\
      & Epochs
        & 17
        & 17
        & 20
        & 5 \\
    \midrule

    \multirow{4}{*}{SEPIC}
      & Neurons/layer
        & 128/128/128
        & 128/128/128
        & 256/128/64
        & 256/128/64 \\
      & Activations
        & \texttt{tanh/tanh/sigmoid}
        & \texttt{tanh/tanh/sigmoid}
        & \texttt{silu/silu/sigmoid}
        & \texttt{silu/silu/sigmoid} \\
      & Batch size
        & 32
        & 32
        & 8
        & 8 \\
      & Epochs
        & 500
        & 56
        & 100
        & 100 \\
    \bottomrule
  \end{tabular}
\end{table*}

\begin{table}[t]
  \centering
  \caption{Nominal parameters.}
  \label{tab:nominal_params}
  \renewcommand{\arraystretch}{1.05}
  \setlength{\tabcolsep}{2.5pt}
  \scriptsize
  \resizebox{\columnwidth}{!}{%
  \begin{tabular}{lccccccccc}
    \toprule
    \makecell{Converter}
    & \makecell{$L_1$\\$[\mathrm{mH}]$}
    & \makecell{$L_2$\\$[\mathrm{mH}]$}
    & \makecell{$C_1$\\$[\mu\mathrm{F}]$}
    & \makecell{$C_2$\\$[\mu\mathrm{F}]$}
    & \makecell{$V_{\mathrm{in}}^{\mathrm{nom}}$\\$[\mathrm{V}]$}
    & \makecell{$V^*$\\$[\mathrm{V}]$}
    & \makecell{$V_o(0)$\\$[\mathrm{V}]$}
    & \makecell{$\kappa$\\}
    & \makecell{$\omega_0$\\} \\
    \midrule
    \'{C}uk & 45   & 45   & 100 & 100  & 12 & -25 & $[-20,0]$ & 120  & 150  \\
    Zeta    & 0.10 & 0.10 & 55  & 200  & 15 & 5   & $[0,15]$  & 3000 & 2000 \\
    SEPIC   & 0.10 & 0.10 & 680 & 2200 & 15 & 25  & $[0,15]$  & 3000 & 2000 \\
    \bottomrule
  \end{tabular}%
  }
\end{table}

\begin{table}[t]
  \centering
  \caption{\'{C}uk converter.}
  \label{tab:cuk_case_settings}
  \renewcommand{\arraystretch}{1.05}
  \setlength{\tabcolsep}{2.5pt}
  \scriptsize
  \resizebox{\columnwidth}{!}{%
  \begin{tabular}{lcccc}
    \toprule
    Setting & Case \#1 & Case \#2 & Case \#3 & Case \#4 \\
    \midrule
    $N_t/N_{\text{scen}}$
      & 4000/300 & 4000/300 & 6000/500 & 6000/500 \\
    $R~[\Omega]$
      & $[25,35]$ & $[25,35]$ & 33 & $[31,34]$ \\
    $V_{\mathrm{in}}~[\mathrm{V}]$
      & 12 & 12 & $[12,13]$ & $[11,14]$ \\
    Switch times [s]
      & none & 0.4 & $[0.2,0.4]$
      & \makecell[c]{$[0.2,0.6]_R$\\$[0.4,0.8]_{V_{\mathrm{in}}}$} \\
    $T_{\mathrm{sim}}~[\mathrm{s}]$
      & 0.8 & 0.8 & 0.6 & 1.0 \\
    \bottomrule
  \end{tabular}%
  }
\end{table}

\begin{table}[t]
  \centering
  \caption{Zeta converter.}
  \label{tab:zeta_case_settings}
  \renewcommand{\arraystretch}{1.05}
  \setlength{\tabcolsep}{2.5pt}
  \scriptsize
  \resizebox{\columnwidth}{!}{%
  \begin{tabular}{lcccc}
    \toprule
    Setting & Case \#1 & Case \#2 & Case \#3 & Case \#4 \\
    \midrule
    $N_t/N_{\text{scen}}$
      & 20000/150 & 20000/150 & 20000/150 & 40000/150 \\
    $R~[\Omega]$
      & $[2.5,4.5]$ & $[2.5,4.5]$ & 2 & $[2,5]$ \\
    $V_{\mathrm{in}}~[\mathrm{V}]$
      & 15 & 15 & $[10,15]$ & $[10,15]$ \\
    Switch times [s]
      & none & $[0.02,0.04]$ & $[0.02,0.04]$
      & \makecell[c]{$[0.02,0.06]_R$\\$[0.04,0.08]_{V_{\mathrm{in}}}$} \\
    $T_{\mathrm{sim}}~[\mathrm{s}]$
      & 0.06 & 0.06 & 0.06 & 0.10 \\
    \bottomrule
  \end{tabular}%
  }
\end{table}

\begin{table}[t]
  \centering
  \caption{SEPIC converter.}
  \label{tab:sepic_case_settings}
  \renewcommand{\arraystretch}{1.05}
  \setlength{\tabcolsep}{2.5pt}
  \scriptsize
  \resizebox{\columnwidth}{!}{%
  \begin{tabular}{lcccc}
    \toprule
    Setting & Case \#1 & Case \#2 & Case \#3 & Case \#4 \\
    \midrule
    $N_t/N_{\text{scen}}$
      & 10000/150 & 20000/150 & 30000/150 & 60000/60 \\
    $R~[\Omega]$
      & $[1,2]$ & $[1,2]$ & 1 & $[1,2]$ \\
    $V_{\mathrm{in}}~[\mathrm{V}]$
      & 15 & 15 & $[15,25]$ & $[15,20]$ \\
    Switch times [s]
      & none & $[0.02,0.04]$ & $[0.03,0.06]$
      & \makecell[c]{$[0.03,0.09]_R$\\$[0.06,0.12]_{V_{\mathrm{in}}}$} \\
    $T_{\mathrm{sim}}~[\mathrm{s}]$
      & 0.02 & 0.06 & 0.09 & 0.15 \\
    \bottomrule
  \end{tabular}%
  }
\end{table}

\section{CONCLUSION}
In this paper, we present a neural network regulator for dc-dc converters. We prove strong invertiblity of averaged dynamical models and use the inversion-based data generation procedure to train a compact multilayer perceptron that maps converter states to the duty ratio. Simulations on the \'Cuk, zeta, and SEPIC converters show close agreement between predicted and model generated duty trajectories, improved closed loop regulation relative to open loop operation, and robust recovery under large load disturbances. This study establishes viability on higher-order averaged converter models over the tested operating envelope. The next phase of this work will incorporate parasitics in the passive and active components, uncertainty-aware models like Gaussian processes, observer-based design, and develop formal closed-loop guarantees under uncertainties.


\section*{ACKNOWLEDGMENT}
This work was supported by the National Science Foundation award number 2409535, ONR N00014-21-1-2164, and a generous gift from Texas Instruments. The authors acknowledge the use of ChatGPT-5.4 for editorial assistance and code formatting support. The authors take full responsibility for the content of this manuscript.



\bibliographystyle{ieeetr}
\bibliography{bibliography_cleanup}
\end{document}